\documentclass[10pt,journal,twocolumn]{IEEEtran}
\usepackage{url}
\usepackage{graphicx}
\usepackage[caption=false,font=normalsize,labelfont=sf,textfont=sf]{subfig}
\usepackage{tikz}
\usetikzlibrary{arrows.meta, positioning}
\usepackage{amsmath,mathtools}
\usepackage{amsthm}
\usepackage{amsfonts}
\usepackage{amssymb}
\usepackage{cite} 
\usepackage{array,soul} 
\newtheorem{theorem}{Theorem}
\newtheorem{lemma}{Lemma}

\newtheorem{corollary}{Corollary}

\usepackage{xcolor,colortbl}
\usepackage{algorithm}
\usepackage{algpseudocode}
\usepackage{multirow,tabularx}
\newcommand{\EX}{\mathbb{E}}
\usepackage{array}
\newcolumntype{P}[1]{>{\centering\arraybackslash}p{#1}}
\begin{document}
\title{Maximizing Qubit Throughput under Buffer Decoherence and Variability in Generation}
\author{
    \IEEEauthorblockN{Padma Priyanka, Avhishek Chatterjee, and Sheetal Kalyani \thanks{This work was supported in parts by ANRF India through grant CRG/2023/005345.} \thanks{The authors are with the Department of Electrical Engineering, Indian Institute of Technology Madras, Chennai, India.}}
}

\maketitle
\begin{abstract}
    Quantum communication networks require transmission of high-fidelity, uncoded qubits for applications such as entanglement distribution and quantum key distribution. However, current implementations are constrained by limited buffer capacity and qubit decoherence, which degrades qubit quality while waiting in the buffer. A key challenge arises from the stochastic nature of qubit generation: there exists a random delay ($D$) between the initiation of a generation request and the availability of the qubit. This induces a fundamental trade-off—early initiation increases buffer waiting time and hence decoherence, whereas delayed initiation leads to server idling and reduced throughput.

We model this system as an admission control problem in a finite-buffer queue, where the reward associated with each job is a decreasing function of its sojourn time. We derive analytical conditions under which a simple “no-lag” policy—where a new qubit is generated immediately upon the availability of buffer space—is optimal. To address scenarios with unknown system parameters, we further develop a Bayesian learning framework that adaptively optimizes the admission policy. In addition to quantum communication systems, the proposed model is applicable to delay-sensitive IoT sensing and service systems.
\end{abstract}
\begin{IEEEkeywords}
	Quantum Communication; Qubit Decoherence; 
	\end{IEEEkeywords}

\section{Introduction}
\label{sec:intro}
Quantum communication networks are rapidly gaining prominence through practical deployments. A crucial step in many of these practical quantum communication systems is the transmission of high-quality uncoded qubits to an intended destination. For the quantum internet, it is important to generate entangled pairs rapidly and send one qubit from each pair to the intended destination \cite{nielsenchuang2000,VanMeter2014QuantumNetworking}. For quantum key distribution, transmitting specially designed (and randomly chosen) high-quality qubits at a high rate is a crucial step \cite{nielsenchuang2000,VanMeter2014QuantumNetworking}. In some of these applications, quantum error-correcting codes may improve performance. However, current technologies do not allow for sophisticated quantum encoding and decoding, as this would require a miniature fault-tolerant quantum computer, which is not realizable at present. Hence, in current systems, uncoded qubits are transmitted for entanglement sharing and key distribution \cite{VanMeter2014QuantumNetworking,Pant2019RoutingEntanglement,Dai2020QuantumQueuingDelay}.

In \cite{chatterjee2018qubits,jagannathan2019qubits,9162076}, the classical Shannon capacity of such systems with qubit decoherence in the transmission buffer was studied. Motivated by the constraints of current quantum technologies, in this paper the metric of interest is the rate of transmission of uncoded qubits at a pre-specified acceptable quality. In addition to uncoded qubits, this work considers another practical issue faced by current quantum communication systems: transmission buffers have small capacity ($\le 2$). This is due to the fact that most quantum communication systems rely on single-photon transmission, and current photonic buffers have limited capacity \cite{tang2015storage,wei2024quantum}.

We consider the case where at most one qubit can wait in the transmission buffer while the qubit generated immediately before it is being transmitted. The decision to start the generation of the next qubit is taken based on the state of the transmission buffer. This decision must balance between two undesirable cases: (i) a qubit generated too early waits for a long time in the buffer and decoheres, and (ii) a qubit generated so late that the transmitter experiences a long idle time. These two scenarios result in poor quality and low throughput, respectively. Balancing between these two cases is key to maximizing the throughput of high-quality qubits, which we refer to as the {\em reward} in the technical sections of the paper. This quantity is sometimes referred to as goodput in the literature on communication networks \cite{kurose2005computer}.

The mathematical formulation of the above qubit transmission problem, as discussed later in Sec.~\ref{sec:system}, is equivalent to a stationary policy for admitting a job into a finite queue (buffer size $2$), where the queue is served by a server with random service times, and an admitted job arrives after a random delay from the time of its admission by the policy. The objective in this scenario is to maximize the average reward across jobs, where the reward for a job is a decreasing function of its sojourn time, i.e., the total time the job spends in the system after arrival.

Interestingly, as discussed in Sec.~\ref{sec:system}, the same mathematical formulation is also useful in delay-constrained sensing and communication for memory-constrained IoT devices. It also turns out that a similar formulation closely represents the problem of serving a large number of customers with reasonable service quality in certain online and offline service sectors. In general, understanding the trade-off between the quality of qubits and their throughput, and choosing the appropriate operating point, is crucial in many quantum communication scenarios; hence, it is a topic of active research \cite{Pant2019RoutingEntanglement,Dai2020QuantumQueuingDelay,Mandalapu2024EQCcodes,Chandra2022DirectQuantumComm,Cacciapuoti2020QuantumTeleportationInternet}.
\subsection{Main Contributions}
The first contribution of this work is the formulation of the above qubit generation problem as an admission (or calling) problem in a finite-buffer queue. The goal is to maximize the average reward of the finite-buffer queueing system, where each job accrues a reward that is a general function of its sojourn time. This formulation is shown to be useful in IoT and delay-sensitive communications, as well as in certain service platforms.

Building on this formulation, we analytically characterize the region of distributions and parameters for which the simplest implementable policy is optimal: admit (or call) a job as soon as the job in service departs. Since IoT sensors have limited computational capability, and integrating complicated semiconductor electronics with optical buffers in quantum communication systems is challenging, the above simple policy is highly desirable.

Finally, we develop a Bayesian learning framework that enables policy optimization without prior knowledge of arrival or service time distributions, thereby relaxing the classical assumption of full distributional information. The Bayesian learning framework is adaptive in nature and therefore can also handle mean drift and mean shift, which are important practical issues in quantum communications \cite{lavery2016polarization,wang2025fast}.
\subsection{Organization}
The remainder of this paper is organized as follows. Section II describes the system model and preliminaries. Section III presents the analytical framework and theoretical results. Section IV reports the simulation results. Section V concludes the paper and discusses the directions for future work.
\section{System Model}
\label{sec:system}
In a traditional queuing system, a server serves jobs sequentially in their order of arrival. While the server serves a job the jobs which arrived behind it wait in a queue, which can hold a large number of jobs. 

Here, motivated by the applications discussed above, we consider the following server and queue system.  The service times of the jobs are i.i.d. $\{S_i\}$. There is an infinite reservoir of jobs from which the queue calls jobs. However, there are random i.i.d. delays $\{D_i\}$ between calling (or admitting) a job and its arrival to the queue. At most two jobs can be in the queue at any point of time, with one of them in service.

The problem in hand is to decide when the next job should be called after the waiting job goes into service. This decision should be chosen to optimize a long-term average reward of the system, which depends on the sojourn times of the jobs. For most practical queuing systems, stationary policies are sought due to their ease of implementation and good performance 
\cite{sutton1998reinforcement,srikant2014communication}.

The most natural class of policies for this setting is the choice of a lag $\Delta$, between the waiting job entering service and calling the next job into the buffer. In this work, we consider policies for which $\Delta$ is chosen deterministically.

Let $\{T_i\}$ be the respective sojourn times of jobs $i=1, 2, \ldots$, under a particular policy taken by the queue for calling jobs. Sojourn time of a job is its service time plus the time spent waiting in the queue.
For most practical queuing systems, stationary policies are sought due to their ease of implementation. 
A stationary policy leads to a stationary and ergodic queue evolution and
hence, the rate of processing of jobs is well defined and so is the stationary distribution of sojourn times. We denote them by $\lambda$ and $T$, respectively.

For job $i$, the system earns a reward $f(T_i)$, where $f$ is a non-negative and non-increasing function of the sojourn time. Let $N(t)$ be the total number of jobs served by the queue up to time $t$. The central problem of this paper is the following.  
$$ \max_{\Delta \ge 0} \lim_{t \to \infty} \frac{1}{t} \sum_{i=1}^{N(t)} f(T_i).$$

The above limit is almost surely deterministic and finite for all $\Delta \ge 0$, since the queueing system is stable. Though the above problem is quite simple to state, related problems in the simpler case with infinite buffer, are known to be intractable \cite{mandal2023towards}.
Hence, a complete analytical or algorithmic solution for obtaining the optimal $\Delta$ in the finite buffer case is intractable. 

Here, we take an important first step by analytically characterizing practically verifiable conditions under which a simple policy of no lag, i.e. $\Delta=0$, is optimal. This intuitive policy of calling the next jobs as soon as the previous job enters service is easy to implement and hence, is the natural choice in many practical systems. Our analytical result gives conditions when this intuitive policy is indeed the best policy. Later, we propose a Bayesian learning scheme that chooses $\Delta$  adaptively by learning from the past rewards.

The above limit can be rewritten as a simple expression in terms of expectation with respect to the stationary distribution of the queue.  Clearly, for any finite $\Delta$, $N(t) \to \infty$ almost surely, if $S$ and $D$ have proper distributions (i.e., no mass at infinity). Thus, the limit $\lim_{t \to \infty} \frac{1}{t} \sum_{i=1}^{N(t)} f(T_i)$ can be rewritten as $\lim_{t \to \infty} \frac{N(t)}{t} \frac{1}{N(t)}\sum_{i=1}^{N(t)} f(T_i)$. For any finite $\Delta$, since the system is stationary and ergodic 
\cite{wolff1989stochastic,sutton1998reinforcement,srikant2014communication}, $\frac{N(t)}{t} \to \lambda$ and $\frac{1}{N(t)}\sum_{i=1}^{N(t)} f(T_i) \to \EX[f(T)]$, almost surely. Here, the expectation is with respect to the stationary distribution of the sojourn time.

Thus, the above problem can be reformulated as: find the $\Delta \ge 0$ that maximizes the expected reward, $G=\lambda~\EX[f(T)]$, which is equal to the long-term average reward.

As discussed in Sec.~\ref{sec:intro}, though quantum communication is the main motivation of the above problem, it has direct applications in multiple other domains. Here, we first discuss how the above formulation is natural for the quantum communication scenarios discussed in Sec.~\ref{sec:intro}. Followed by this we discuss how the problem emerges in applications like IoT and delay sensitive communication. Finally, we discuss its application in service sectors. 
\subsubsection*{Application in Quantum Communication}  Decoherence of uncoded qubits in the transmission buffer, and how that leads to a tradeoff between high qubit transmission rate and quality of qubits have been discussed in Sec.~\ref{sec:intro}. This tradeoff naturally leads to a natural metric: throughput or rate of transmission of qubits that are above a certain quality level (a.k.a. quantum state fidelity \cite{nielsenchuang2000}). 

In the well known erasure and depolarizing noise models for qubit decoherence \cite{nielsenchuang2000}, qubits remain uncorrupted with a probability $1-p$ and with probability $p$ it degenerates into a useless quantum state (erasure or maximally mixed state). The probability of a qubit being uncorrupted, $1-p$, decreases with the time a qubit spends in contact with the environment (sojourn time in the transmission buffer or queue, in our case) and is captured by a non-negative and non-increasing function $f(\cdot)$ of its sojourn time. Thus, the expected number of uncorrupted qubits transmitted per unit is  $\lim_{t \to \infty} \frac{1}{t} \EX \sum_{i=1}^{N(t)} f(T_i)$, which, by Fubini-Tonelli theorem \cite{durrett2019probability} is $\EX \left[\lim_{t \to \infty} \frac{1}{t}  \sum_{i=1}^{N(t)} f(T_i)\right]=\lim_{t \to \infty} \frac{1}{t}  \sum_{i=1}^{N(t)} f(T_i)$. Since the last limit is almost surely deterministic for any stationary policy for generating qubits based on the buffer state.

\subsubsection*{Applications in IoT and Delay Sensitive Communications} Wireless IoTs are often used for monitoring critical processes. Sensors in IoT devices sense the environment and share the measurements as packets over a multiple access wireless channel, which is closely modeled as a server with random service times. If the sensor sends samples or measurements frequently, the receiver or the decision center will have many highly delayed measurements, resulting in delay in decision making. On the other hand, too few samples will also lead to delay in decision making. A tradeoff would be send measurements or packets as frequently as possible while ensuring a tolerable delay \cite{Yates2021AoINetworks, 
Kaul2012StatusUpdates,Bedewy2020AoIFramework}. Mathematically, if $\{T_i\}$ are delays of the samples, the goal is to maximize the number of samples delivered per unit time with low delay, i.e., max $\EX_{\{T_i,V_i\}} \frac{1}{t} \sum_{i =1}^{N(t)} 1(T_i \le V_i)$. Here, $N(t)$ is the number of samples delivered by $t$ and $V_i$ is the threshold (possibly random) by which sample or packet $i$ must be delivered. In general, $\EX_{V_i} 1(T_i \le V_i)$  would be $f(T_i)$, where $f$ depends on the distribution of $V_i$. Thus, the  whole expression is equivalent to $ \frac{N(t)} {t} \cdot \frac{1}{N(t)} \EX_{\{T_i\}} \sum_{i =1}^{N(t)} f(T_i)$. Also, since in many applications IoT sensors are frugal, the buffer for holding samples or packets before transmission  is often small ($\ll 10$) \cite{hamidouche2019efficient}.
Thus, the queueing problem discussed above is indeed the right problem to consider in this setting as well.
 \subsubsection*{Application in Customer Service} Consider a real-time scenario where a hospital or restaurant's customer waiting room has limited capacity and can accommodate only one or two customers maximum at a time. 
If customers experience longer waiting times, their satisfaction decreases, which results in  negative online feedback or review. The service portal's effective revenue for a customer is a combination of service charge and the customer's satisfaction.
 This effective revenue can be modeled as a function  $f(T)$, where $T$ represents the total time in the system. This function $f$ is naturally non-increasing in $T$ and can be of exponential or polynomial form 
\cite{alvarado2017modeling,fang2013quantifying,loewenstein1992anomalies,mazur2013adjusting}, depending on the application. The objective here would be to maximize the total expected effective revenue per unit time, which is again $\lambda~\EX[f(T)]$.
\section{Analytical Approach}
We first consider a general non-increasing and non-negative function $f(T)$. Later we consider $f(T)=\exp(-\kappa T)$ motivated by the application in quantum communications \cite{nielsenchuang2000,jagannathan2019qubits,Pant2019RoutingEntanglement}. We also consider
 $f(T)=\frac{1}{(T+1)^ \gamma}$  inspired by the IoT sensing and customer service applications \cite{Yates2021AoINetworks,alvarado2017modeling,fang2013quantifying,loewenstein1992anomalies,mazur2013adjusting}. 

The sojourn time $T$ of a typical job in the system is its waiting time $W$ plus its service time $S$. The waiting time $W$ depends on the service of the job before it, the lag $\Delta$ in calling the current job and the delay $D$ in its arrival. Here, we assume the lag is deterministic. Thus, despite the service, delay and lag being independent, the sojourn times of the jobs are correlated. Here our interest is in the quantity $\lambda \EX[f(T)]$, which is well defined and finite.

First, we understand the relation between waiting time $W$ of the current job and the service time of the previous job, which we denote by $S_{-1}$. This would allow us to evaluate both $\EX[f( T)]$ and $\lambda$ and thus obtain $G$.

Let the interarrival time between two jobs be denoted by $\text{IAT}$. Then, by elementary renewal theorem \cite{wolff1989stochastic}, the rate of arrival of jobs $\lambda=\frac{1}{\EX[\text{IAT}]}$. A job is called $\Delta$ time after the previous job has gone into service and it arrives $D$ time after calling. Thus, IAT is nothing but the sum of the waiting time of the previous job, $\Delta$ and $D$. This is shown in Fig. \ref{fig:IAT}. Thus, $\EX[\text{IAT}]=\Delta+\EX[D]+\EX[W]$. Here, the last term is the expected waiting time of the previous job, which, by stationarity, is equal to $\EX[W]$.

Clearly, $T=W+S$ and $W$ is independent of $S$. So, $\EX[f(T)] =\EX[f(W+S)]$. For obtaining  $\EX[W]$, note that the called job reaches $\Delta+D$ time after calling and the job in service (previous job) takes $S_{-1}$ time to get served. Thus, if the called job reaches before the previous job is served, it has to wait for the remaining service to be completed.  But, if it reaches after the service has been completed, it does not wait at all. Hence, $W=\max(S_{-1}-\Delta-D,0)$.
Thus, for a deterministically chosen lag, the expression for the general reward  becomes
\begin{equation}
\label{gen_reward}    
G = \frac{\EX[f(W+S)]}{\Delta+\EX[D]+\EX[W]}
\end{equation}
\begin{figure}[!t]
\centering
\resizebox{\columnwidth}{!}{%
\begin{tikzpicture}[
    >=Stealth,
    line/.style={thick},
    doublearrow/.style={<->, thick}
]

\draw[line] (0,0) -- (8,0);

\foreach \x in {0,2.5,5,8}
    \draw[line] (\x,0.4) -- (\x,-0.4);

\draw[doublearrow] (0,0.15) -- (2.5,0.15);
\draw[doublearrow] (2.5,0.15) -- (5,0.15);
\draw[doublearrow] (5,0.15) -- (8,0.15);

\node at (1.3,0.35) {\footnotesize Waiting Time};
\node at (3.75,0.35) {\footnotesize $\Delta$};
\node at (6.5,0.35) {\footnotesize Delay};

\node[align=center, text width=2.2cm, anchor=north] at (0,-0.5)
{\footnotesize $t=t_2$\\ 2$^{\mathrm{nd}}$ job arrives};

\node[align=center, text width=2.2cm, anchor=north] at (2.5,-0.5)
{\footnotesize $t=t_2+w$\\ 2$^{\mathrm{nd}}$ job goes for service};

\node[align=center, text width=2.2cm, anchor=north] at (5,-0.5)
{\footnotesize $t=t_2+w+\Delta$\\ Call 3$^{\mathrm{rd}}$ job };

\node[align=center, text width=2.2cm, anchor=north] at (8,-0.5)
{\footnotesize $t=t_2+w+\Delta+d$\\ 3$^{\mathrm{rd}}$ job arrives};

\end{tikzpicture}
}
\caption{Calculation of inter-arrival time}
\label{fig:IAT}
\end{figure}
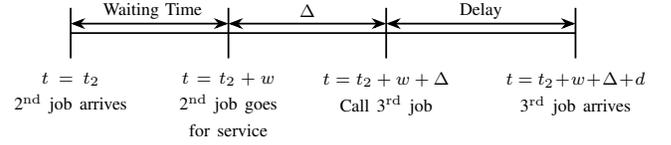
\subsection{General reward with general function f(T) for general service distribution and general delay distribution}
To determine the optimal lag, we start with the general service distribution and the general delay distribution with well-defined exponential moments for general function $f(T)$.

\begin{theorem}
\label{thm1:generalReward}
    The general reward function $G$ associated with the proposed queueing system, characterized by general service and a general delay distribution, with a general function $f$ and a deterministically chosen lag, attains its global maximum at $\Delta = 0$, if the following conditions holds:
    \begin{multline*}
     \mathbb{E}[D + S +1]\frac{\sqrt{\mathbb{E}[(f'(S))^2]}}{\mathbb{E}[f(S+S_{-1})]} \\
     \leq \frac{1}{\sqrt{\mathbb{P}(S_{-1} -D  > 0)}} - \sqrt{\mathbb{P}(S_{-1} -D  > 0)}        
    \end{multline*}
\end{theorem}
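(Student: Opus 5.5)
We will show that $G(\Delta)$ in \eqref{gen_reward} is non-increasing in $\Delta\ge 0$, by proving $G'(\Delta)\le 0$ for every $\Delta\ge 0$ under the stated condition. Then $\Delta=0$ is the global maximiser. Write $W_\Delta=\max(S_{-1}-D-\Delta,0)$, $N(\Delta)=\EX[f(W_\Delta+S)]$ and $M(\Delta)=\Delta+\EX[D]+\EX[W_\Delta]$. Then $G'\le 0$ is equivalent to
\[
N'(\Delta)\,M(\Delta)\le N(\Delta)\,M'(\Delta).
\]
Since $\frac{d}{d\Delta}W_\Delta=-\mathbf{1}(S_{-1}-D>\Delta)$ almost everywhere, dominated convergence gives two formulas. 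First, with $p_\Delta=\mathbb{P}(S_{-1}-D>\Delta)$,
\[
M'(\Delta)=1-p_\Delta .
\]
Second,
\[
N'(\Delta)=-\EX\bigl[f'(W_\Delta+S)\,\mathbf{1}(S_{-1}-D>\Delta)\bigr].
\]
Here $S$ is independent of $(S_{-1},D)$, and we assume $f$ is differentiable with $f'$ square-integrable.

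\textbf{Bounding $N'$ by Cauchy--Schwarz.} On the event $\{S_{-1}-D>\Delta\}$ we have $W_\Delta>0$, so $f'$ is evaluated at a point larger than $S$. To reach the stated quantity $\EX[(f'(S))^2]$, we assume $|f'|$ is non-increasing (true for the exponential and polynomial families considered later). This gives $|f'(W_\Delta+S)|\le |f'(S)|$. Cauchy--Schwarz and independence then give
\[
N'(\Delta)\le \sqrt{\EX[(f'(S))^2]}\,\sqrt{p_\Delta}.
\]

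\textbf{Bounding $N$ and $M$.} Since $f$ is non-increasing and $W_\Delta\le S_{-1}$, we get $N(\Delta)\ge \EX[f(S+S_{-1})]$. For $M$ we use $\EX[W_\Delta]\le \EX[S]$. The difficulty is that $M(\Delta)$ contains the unbounded term $\Delta$, while the condition involves only $\EX[D+S+1]$. The plan is therefore to prove the inequality on $\Delta\in[0,1]$ and handle large $\Delta$ separately. For $\Delta\le 1$ we have $M(\Delta)\le \EX[D+S+1]$. Also $p_\Delta\le p_0=\mathbb{P}(S_{-1}-D>0)$. Since $p\mapsto \sqrt{p}/(1-p)$ is increasing, it suffices to have
\[
\EX[D+S+1]\sqrt{\EX[(f'(S))^2]}\,\sqrt{p_0}\le \EX[f(S+S_{-1})](1-p_0).
\]
Dividing by $\sqrt{p_0}\,\EX[f(S+S_{-1})]$ turns this into exactly the stated condition, since $\frac{1-p_0}{\sqrt{p_0}}=\frac{1}{\sqrt{p_0}}-\sqrt{p_0}$. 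This is where the shape of the hypothesis comes from.

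\textbf{Main obstacle: large $\Delta$.} For $\Delta>1$ the factor $M(\Delta)$ grows like $\Delta$, so the termwise bound above fails. The first approach to try is a direct comparison: since $N(\Delta)\le N(\infty)=\EX[f(S)]$ and $M(\Delta)\ge \Delta+\EX[D]$, we have $G(\Delta)\le \EX[f(S)]/(\Delta+\EX[D])$ for all $\Delta$, while $G(0)\ge \EX[f(S+S_{-1})]/\EX[D+S]$. Thus $G(\Delta)\le G(0)$ once $\Delta$ exceeds an explicit threshold, and the condition should ensure this threshold is at most $1$. If that does not close, the fallback is a sharper bound on $N'$: its contribution comes only from the event $\{S_{-1}-D>\Delta\}$, whose probability $p_\Delta$ decays in $\Delta$. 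Under the exponential-moment assumptions mentioned before the theorem, $\Delta\, p_\Delta^{1/2}$ stays bounded, so the product $N'(\Delta)M(\Delta)$ can still be controlled by the same constant $\EX[D+S+1]$. I expect making this tail step rigorous to be the delicate part of the proof.
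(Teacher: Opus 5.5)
\paragraph{Verdict.} Your argument on $\Delta\in[0,1]$ is exactly the paper's. The paper uses the same Cauchy--Schwarz bound, $N(\Delta)\ge\EX[f(S+S_{-1})]$, $\EX[W_\Delta]\le\EX[S]$, and $\Delta^2p_\Delta\le p_0$. For $\Delta>1$, however, there is a genuine gap.

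\paragraph{Why $\Delta>1$ cannot be closed by your route.} Your stated goal, $G'(\Delta)\le 0$ for every $\Delta\ge 0$, does not follow from the hypothesis. Take $f(x)=e^{-x}$, $D\equiv 1/2$, and $S\in\{0,200\}$ with $\mathbb{P}(S=200)=q=0.01$.
\begin{itemize}
\item Then $p_0=0.01$ and $\EX[D+S+1]=3.5$, so the hypothesis reads $3.5\sqrt{0.99}/0.99^2\approx 3.55\le 9.9$ and holds.
\item On $[0,199.5)$ we have $p_\Delta=q$. As $\Delta\uparrow 199.5$, $N'\to\approx 0.99\,q$, $M\to 200$, $N\to\approx 0.99$ and $M'=1-q$.
\item Hence $N'M-NM'\to\approx 0.99(2-0.99)>0$, so $G$ is strictly increasing just below $199.5$. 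The global maximum is still at $\Delta=0$; only monotonicity fails.
\end{itemize}
So neither of your fallbacks can work in general.
\begin{itemize}
\item Route (a) fails because the hypothesis does not force the threshold $\EX[f(S)]\EX[D+S]/\EX[f(S+S_{-1})]-\EX[D]$ below $1$. For $f=e^{-\kappa x}$ with $\kappa\to 0$, the hypothesis holds whenever $p_0<1$, while the threshold tends to $\EX[S]$, which may exceed $1$.
\item Route (b) needs $\Delta^2p_\Delta\le p_0$ for all $\Delta>1$, not mere boundedness of $\Delta\sqrt{p_\Delta}$. That is precisely the extra assumption the paper inserts inside its proof: $\Delta^2\,\mathbb{P}(S_{-1}-D>\Delta)$ decreasing for $\Delta>1$. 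It is not in the theorem statement, and it fails in the example above, where $\Delta^2p_\Delta=0.01\Delta^2$ on $(1,199.5)$.
\end{itemize}

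\paragraph{The missing idea: compare $G(\Delta)$ with $G(0)$ directly.} Let $U=S_{-1}-D$, so $W_0-W_\Delta=\min(U^+,\Delta)$. Your convexity assumption ($|f'|$ non-increasing) gives $f(S+W_\Delta)-f(S+W_0)\le |f'(S)|\,(W_0-W_\Delta)$. Using independence of $S$ and $U$ instead of Cauchy--Schwarz, you get
\[
N(\Delta)-N(0)\le \EX[|f'(S)|]\,\EX[\min(U^+,\Delta)]\le \EX[|f'(S)|]\,p_0\,\Delta,\qquad M(\Delta)-M(0)=\Delta-\EX[\min(U^+,\Delta)]\ge (1-p_0)\,\Delta .
\]
Hence
\[
N(\Delta)M(0)-N(0)M(\Delta)\le \Delta\bigl[\EX[|f'(S)|]\,p_0\,M(0)-N(0)(1-p_0)\bigr].
\]
Now use $M(0)=\EX[\max(D,S_{-1})]\le\EX[D+S+1]$, $N(0)\ge\EX[f(S+S_{-1})]$, $\EX[|f'(S)|]\le\sqrt{\EX[(f'(S))^2]}$ and $p_0\le\sqrt{p_0}$. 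With these, the bracket is non-positive exactly under the stated hypothesis. This gives $G(\Delta)\le G(0)$ for all $\Delta\ge 0$ at once, with no tail assumption and no split at $\Delta=1$.
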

\begin{proof}
The reward function is given by
\begin{equation}
    G= \frac{\mathbb{E}[f(S+W)] }{\Delta+ \mathbb{E}[D]+\mathbb{E}[W]}    
\end{equation}
An important intermediate result that we use in this setting is the following lemma, which is proved later. 
\begin{equation*}
    \frac{d\mathbb{E}[W]}{d{\Delta}} = - \mathbb{P}(S-D> \Delta)
    \end{equation*}
To maximize the reward function over $\Delta$, the derivative of reward function w.r.t $\Delta$ is given by

\begin{equation} \label{derivativeg}
\begin{split}
\frac{dG}{d\Delta} = & \frac{\mathbb{E}\left[f'(S+W)\frac{dW}{d\Delta}\right] (\Delta+\mathbb{E}[D]+\mathbb{E}[W])}{(\Delta+\mathbb{E}[D]+\mathbb{E}[W])^2} \\
& - \frac{\left(1+\frac{d \mathbb{E}[W]}{d\Delta}\right) \mathbb{E}[f(S+W)]}{(\Delta+\mathbb{E}[D]+\mathbb{E}[W])^2}
\end{split}
\end{equation}

If $(\frac{dG}{d\Delta} < 0, \forall  \Delta \geq 0 ) \iff (\text{Numerator of equation (\ref{derivativeg})} < 0,  \forall \Delta \geq 0)$, then optimal $\Delta = 0$.\\
If f(x) is a decreasing function, then $f'(x) < 0$ and $|f'(x)|$  decreases as x increases. Therefore, 
$$f'(S+W)\frac{dW}{d\Delta} \leq |f'(S)||\mathbf{1}(S_{-1} -D >\Delta)  | $$
The first term in the numerator of the equation (\ref{derivativeg}) is upper bounded by 
\begin{align*}
   & (\Delta + \mathbb{E}[D] + \mathbb{E}[S])  \mathbb{E}[|f'(S)||\mathbf{1}(S_{-1} -D >\Delta)]\\
    & \leq  (\Delta + \mathbb{E}[D] + \mathbb{E}[S]) \sqrt{\mathbb{E}[(f'(S))^2]}\sqrt{\mathbb{E}[\mathbf{1}(S_{-1} -D >\Delta)}]\\
   & \hspace{4.5cm}\text{(Cauchy–Schwarz inequality)} \\
    & \leq  (\Delta + \mathbb{E}[D] + \mathbb{E}[S]) \sqrt{\mathbb{E}[(f'(S))^2]}\sqrt{\mathbb{P}(S_{-1}-D > \Delta)}\\
    & \hspace{4.5cm}\text{(Lemma \ref{lem:E[W]})} \\
    & \leq \mathbb{E}[D+S] \sqrt{\mathbb{E}[(f'(S))^2]\mathbb{P}(S_{-1}-D > \Delta)} \\
    &  \hspace{2.5cm}   + \sqrt{ \Delta^2 \mathbb{E}[(f'(S))^2]\mathbb{P}(S_{-1}-D > \Delta)}\\
    & \leq \mathbb{E}[D+S] \sqrt{\mathbb{E}[(f'(S))^2]\mathbb{P}(S_{-1}-D > 0)} \\
    &  \hspace{2.5cm}   + \sqrt{  \mathbb{E}[(f'(S))^2]\mathbb{P}(S_{-1}-D > 0)}\\
    & \leq \mathbb{E}[D+S+1] \sqrt{\mathbb{E}[(f'(S))^2]\mathbb{P}(S_{-1}-D > 0)}    
\end{align*}
For $\Delta \leq 1$, we have
 $$\Delta^2 \mathbb{P}(S_{-1}-D > \Delta) \leq \mathbb{P}(S_{-1}-D > \Delta) \leq \mathbb{P}(S_{-1}-D > 0).$$
For $\Delta>1$ we assumed $\Delta^2\mathbb{P}(S_{-1}-D > \Delta)$ is decreasing with $\Delta$.\\ 
The second term  in the numerator of the equation (\ref{derivativeg}) is 
\begin{align*}
&\hspace{-0.7cm} \mathbb{P}(S_{-1}-D \leq \Delta)\mathbb{E}[f(S+W)]\\
& \geq \mathbb{P}(S_{-1}-D \leq 0)\mathbb{E}[f(S+\max(S_{-1}-D,0))]\\
& \hspace{0.5cm} \text{(Both terms attain their minimum when $\Delta=0$)}\\
& \geq \mathbb{P}(S_{-1}-D \leq 0)\mathbb{E}[f(S+S_{-1})]
, \forall \Delta
\end{align*}
Therefore  $\Delta = 0 $ is optimal if
\begin{align*}
\mathbb{E}[D+S+1] &\sqrt{\mathbb{E}[(f'(S))^2]\mathbb{P}(S_{-1}-D > 0)}  \\
    & \hspace{0.5cm} \leq \mathbb{P}(S_{-1}-D \leq 0)\mathbb{E}[f(S+S_{-1})]\\
\implies  \mathbb{E}[D+S+1] &\frac{\sqrt{\mathbb{E}[(f'(S))^2]}}{\mathbb{E}[f(S+S_{-1})]} \\
& \hspace{-1cm}\leq \frac{1}{\sqrt{\mathbb{P}(S_{-1}-D > 0)}} - \sqrt{\mathbb{P}(S_{-1}-D > 0)}
\end{align*}
\end{proof}
\begin{lemma}
\label{lem:E[W]}
\begin{equation}
    \frac{d\mathbb{E}[W]}{d{\Delta}} = - \mathbb{P}(S-D> \Delta)
    \end{equation}
\end{lemma}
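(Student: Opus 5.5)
We want $\frac{d}{d\Delta}\mathbb{E}[W]$ where $W=\max(S_{-1}-\Delta-D,0)$, with $S_{-1}$ and $D$ independent and not depending on $\Delta$. Since $S_{-1}$ has the same law as $S$, the statement's $\mathbb{P}(S-D>\Delta)$ equals $\mathbb{P}(S_{-1}-D>\Delta)$. The plan is to reduce everything to the single random variable $X:=S_{-1}-D$. Then
$$\mathbb{E}[W]=\mathbb{E}[(X-\Delta)^+],$$
and the claim becomes the classical identity $\frac{d}{d\Delta}\mathbb{E}[(X-\Delta)^+]=-\mathbb{P}(X>\Delta)$.

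\textbf{Step 1: tail-integral representation.} Write $(X-\Delta)^+=\int_\Delta^\infty \mathbf{1}(X>u)\,du$. Applying Tonelli (the integrand is non-negative) gives
$$\mathbb{E}[W]=\int_\Delta^\infty \mathbb{P}(X>u)\,du.$$
This quantity is finite because $\mathbb{E}[W]\le \mathbb{E}[S]<\infty$, which the paper implicitly assumes.

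\textbf{Step 2: differentiate.} By the fundamental theorem of calculus, $\frac{d}{d\Delta}\mathbb{E}[W]=-\mathbb{P}(X>\Delta)$ at every continuity point of $u\mapsto \mathbb{P}(X>u)$. At any point the right derivative always equals $-\mathbb{P}(X>\Delta)$, by right-continuity of the survival function. If $X$ has an atom at $\Delta$, the left derivative is $-\mathbb{P}(X\ge\Delta)$ instead. I would state the lemma under the natural assumption that $S$ or $D$ has a continuous distribution, so that $X$ has no atoms. Otherwise I would interpret the derivative as a right derivative, which is all the monotonicity argument in Theorem~\ref{thm1:generalReward} needs.

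\textbf{Alternative route.} One can differentiate inside the expectation: pathwise, $\frac{d}{d\Delta}(X-\Delta)^+=-\mathbf{1}(X>\Delta)$ except on $\{X=\Delta\}$. The difference quotients are bounded by $1$ in absolute value, because $x\mapsto x^+$ is $1$-Lipschitz, so dominated convergence justifies the interchange. This version also shows $\frac{dW}{d\Delta}=-\mathbf{1}(S_{-1}-D>\Delta)$ almost surely, which is exactly what the proof of Theorem~\ref{thm1:generalReward} uses inside $\mathbb{E}[f'(S+W)\frac{dW}{d\Delta}]$.

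\textbf{Main obstacle.} The only delicate point is the atom issue at $\{X=\Delta\}$, which is why I would make the no-atom assumption (or the right-derivative interpretation) explicit.
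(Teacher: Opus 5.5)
Your proof is correct, and your main route differs from the paper's.

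\textbf{The paper's route.} The paper argues pathwise. For $U=S_{-1}-D$ it writes $\frac{d}{d\Delta}\max(U-\Delta,0)=-\mathbf{1}(U>\Delta)$, with a subgradient term $\alpha\mathbf{1}(U=\Delta)$, $-1\le\alpha\le 0$, that it dismisses for continuous variables. It then moves the derivative inside the expectation without justification, integrates against a density $f_U$, and replaces $S_{-1}$ by $S$ by stationarity.

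\textbf{Your alternative route} is this same argument made rigorous. The uniform bound $1$ on the difference quotients, from the $1$-Lipschitz property of $x\mapsto x^+$, is exactly the dominating function the paper's interchange step lacks. You also only need $\mathbb{P}(U=\Delta)=0$, not the existence of a density.

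\textbf{Your main route} uses $\mathbb{E}[(X-\Delta)^+]=\int_\Delta^\infty\mathbb{P}(X>u)\,du$ followed by the fundamental theorem of calculus. What it buys:
\begin{itemize}
\item it needs no derivative--expectation interchange, only Tonelli;
\item it gives the exact one-sided derivatives, $-\mathbb{P}(X>\Delta)$ from the right and $-\mathbb{P}(X\ge\Delta)$ from the left, even when $X$ has atoms;
\item it makes the monotonicity and convexity of $\mathbb{E}[W]$ in $\Delta$ immediate.
\end{itemize}
What it does not deliver is the pathwise identity $\frac{dW}{d\Delta}=-\mathbf{1}(S_{-1}-D>\Delta)$. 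The proof of Theorem~\ref{thm1:generalReward} substitutes that identity inside $\mathbb{E}[f'(S+W)\frac{dW}{d\Delta}]$, so keeping your alternative route alongside the main one is worthwhile.

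\textbf{One small precision.} Replacing $S_{-1}$ by $S$ requires that $(S,D)$ and $(S_{-1},D)$ have the same joint law. Equal marginals of $S$ and $S_{-1}$ are not enough; you also need $S$ independent of $D$. This holds in the model, but you should state it.
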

\begin{proof} 
Waiting time $W$ is given by 
\begin{equation}
    W=\max(S_{-1}-\Delta-D,0)
\end{equation}
$$ \frac{dW}{d\Delta} = -\mathbf{1}(S_{-1} -D >\Delta) + \alpha \mathbf{1}(S_{-1} -D  = \Delta)$$
where $-1\leq \alpha\leq 0$. The second term is zero if $S_{-1}$ and D are continuos random variables. 

Let $U = S_{-1} -D$, then 
\[
   \max(U - \Delta , 0) =
\begin{cases}
    U-\Delta& \text{if } U> \Delta\\
    0 & \text{elseif } U \leq \Delta
\end{cases}
\]
\[
  \frac{d}{d\Delta} \max(U - \Delta , 0) =
\begin{cases}
     -1 & \text{if } U> \Delta\\
    0 & \text{elseif } U \leq \Delta
\end{cases}
\]
 $$ \implies \frac{dW}{d\Delta} = -\mathbf{1}(S_{-1} -D >\Delta) $$
Therefore, 
\begin{align*}  
    \frac{d\mathbb{E}[W]}{d{\Delta}} &= 
\frac{d}{d\Delta}\mathbb{E}[\max(S_{-1}-\Delta - D, 0)]\\
&= \mathbb{E}[\frac{d}{d{\Delta}} \max(S_{-1}-\Delta - D,0)]\\
&= \mathbb{E}[\frac{d}{d{\Delta}} \max(U-\Delta,0)]\\
&= \int  \frac{d}{d{\Delta}} \max(U-\Delta,0) f_{U}(u)\, du \\
&= \int_{U>\Delta} -1 .f_{U}(u)\, du \\
&= -\int_{U>\Delta} f_{U}(u)\, du \\
&= - \mathbb{P}(S_{-1} - D > \Delta)\\
&= - \mathbb{P}(S-D > \Delta) \text{ (Since the system is stationary)}
\end{align*}

\end{proof}
The condition derived in Theorem \ref{thm1:generalReward} is a sufficient condition for the optimal lag to be $\Delta = 0$, and it holds for any function $f$. However, in practice, $f$ mostly have exponential or polynomial form 
\cite{chatterjee2018qubits, jagannathan2019qubits,9162076,siddhu2024unital, nielsenchuang2000}.

 Therefore, we first consider the exponential function $f$, which is practically motivated by applications in quantum communication, where decoherence effects naturally lead to exponential penalties in the sojourn time \cite{nielsenchuang2000,chatterjee2018qubits,jagannathan2019qubits}. We also study a polynomial reward function, motivated by IoT sensing and customer service applications in which performance metrics are closely related to the moments of delay \cite{Yates2021AoINetworks,alvarado2017modeling,fang2013quantifying,loewenstein1992anomalies,mazur2013adjusting}. For both exponential and polynomial reward functions, the resulting sufficient conditions admit insightful interpretations. In particular, the exponential reward function corresponds to moment generating functions of the sojourn time, while the polynomial reward function is directly related to negative moments. 

\subsection{General reward with exponential function f(T) for general service distribution and general delay distribution}
Here, we consider $f(T)=\exp(-\kappa T)$ and our interest is in the quantity $\lambda \EX[\exp(-\kappa T)]$, which is well defined and finite.
So, $\EX[\exp(-\kappa T)] =\EX[\exp(-\kappa W)] \EX[\exp(-\kappa S)]$. Henceforth, for any random variable $X$ and any real number $a$, we shall denote $\EX[\exp(a X)]$ by $M_X(a)$, whenever it is well defined. Note that $M_W(-\kappa)$ and $M_S(-\kappa)$ are well defined and finite since $W$ and $S$ are non-negative. Thus, $$G = \frac{M_W(-\kappa) M_S(-\kappa)}{\Delta+\EX[D]+\EX[W]}.$$

From the knowledge of the distribution of $S$, $M_S(-\kappa)$ can be obtained. 
For obtaining $M_W(-\kappa)$,the waiting time is $W=\max(S_{-1}-\Delta-D,0)$, which implies
$$M_W(-\kappa) \le \min ({M}_S(-\kappa)M_{\Delta}(\kappa) {M}_D(\kappa) ,1)$$ by Jensen's inequality and independence of $S_{-1}$, $\Delta$ and $D$.
 The term ${M}_S(-\kappa)$ replaces ${M}_{S_{-1}}(-\kappa)$ in the above expression since for a stationary system they are the same. Since the lag $\Delta$ is chosen deterministically, the term ${M}_{\Delta}(\kappa)=e^{\kappa \Delta}$.
This implies that the following expression is a surrogate for $G$ that upper-bounds it.

\begin{equation}\label{gen_reward1}    
G_{\text{sur}} =  \frac{{M}_S(-\kappa) \min ({M}_S(-\kappa) e^{\kappa \Delta} {M}_D(\kappa) ,1)}{\Delta+\EX[D]+\EX[W]}
\end{equation}
Later, we shall discuss the utility of this surrogate upper bound for exponential $f$.
\begin{corollary}
  The general reward function $G$ associated with the proposed queueing system characterized by general service and a general delay distribution for the exponential function $f(x)=\exp(-\kappa x)$ and deterministically chosen lag, attains its global maximum at $\Delta = 0$, if it satisfies the condition
 \begin{multline*}
    \kappa \mathbb{E}[D +S +1]\frac{ \sqrt{M_S(-2\kappa)}}{(M_S(-\kappa))^2} \\
     \leq \frac{1}{\sqrt{\mathbb{P}(S_{-1} -D  > 0)}} - \sqrt{\mathbb{P}(S_{-1} -D  > 0)}        
    \end{multline*}   
  
\end{corollary}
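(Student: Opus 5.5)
The plan is to obtain the corollary as a direct specialization of Theorem~\ref{thm1:generalReward} to $f(x)=\exp(-\kappa x)$. The only work is to evaluate, in terms of moment generating functions, the two quantities $\sqrt{\mathbb{E}[(f'(S))^2]}$ and $\mathbb{E}[f(S+S_{-1})]$ on the left-hand side of the sufficient condition there. The right-hand side, $1/\sqrt{\mathbb{P}(S_{-1}-D>0)}-\sqrt{\mathbb{P}(S_{-1}-D>0)}$, carries over unchanged.

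First, $f'(x)=-\kappa e^{-\kappa x}$, so
\begin{equation*}
(f'(S))^2=\kappa^2 e^{-2\kappa S}
\quad\text{and}\quad
\sqrt{\mathbb{E}[(f'(S))^2]}=\kappa\sqrt{M_S(-2\kappa)}.
\end{equation*}
This is finite because $S\ge 0$. Second, $S$ and $S_{-1}$ are service times of distinct jobs, hence independent. By stationarity they are identically distributed, so
\begin{equation*}
\mathbb{E}[f(S+S_{-1})]=\mathbb{E}[e^{-\kappa S}]\,\mathbb{E}[e^{-\kappa S_{-1}}]=(M_S(-\kappa))^2 .
\end{equation*}
Substituting both expressions into the condition of Theorem~\ref{thm1:generalReward} gives exactly the stated inequality.

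It remains to check that $f$ meets the properties the proof of Theorem~\ref{thm1:generalReward} actually uses.
\begin{itemize}
\item[(i)] $f$ is non-negative and decreasing: clear.
\item[(ii)] $|f'(x)|=\kappa e^{-\kappa x}$ is non-increasing in $x$. This is what yields $|f'(S+W)|\le|f'(S)|$, since $W\ge 0$.
\item[(iii)] The regularity used for $\Delta>1$, namely that $\Delta^2\mathbb{P}(S_{-1}-D>\Delta)$ is non-increasing, is an assumption on the service and delay distributions, not on $f$. It is inherited exactly as in the theorem.
\end{itemize}
Under the same hypotheses, the conclusion $\Delta=0$ therefore follows.

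The main obstacle, such as it is, is justifying the factorization of $\mathbb{E}[f(S+S_{-1})]$: it relies on the independence of consecutive service times, which holds by the i.i.d.\ assumption on $\{S_i\}$. All the nontrivial content already lives in Theorem~\ref{thm1:generalReward}.
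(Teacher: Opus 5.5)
Your proposal is correct and matches the paper's proof: both specialize Theorem~\ref{thm1:generalReward} by computing $\sqrt{\mathbb{E}[(f'(S))^2]}=\kappa\sqrt{M_S(-2\kappa)}$ and $\mathbb{E}[f(S+S_{-1})]=(M_S(-\kappa))^2$ and substituting them. You are slightly more careful than the paper, which drops the $\kappa^2$ factor in $\mathbb{E}[(f'(S))^2]$ and leaves implicit both the independence used in the factorization and the check that $|f'|$ is non-increasing.
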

\begin{proof}
Let $f(x)=\exp(-\kappa x)$. Then, 
$\mathbb{E}[(f(S)] = \mathbb{E}[e^{-\kappa S}] = M_S(-\kappa)$, 
$\mathbb{E}[(f'(S))^2] = \mathbb{E}[e^{-2\kappa S}] = M_S(-2\kappa)$. 
    By substituting these in to the general sufficient condition for optimal $\Delta =0$    
    we get the following 
     \begin{multline*}
    \kappa \mathbb{E}[D +S +1]\frac{\sqrt{M_S(-2\kappa)}}{(M_S(-\kappa))^2} \\
     \leq \frac{1}{\sqrt{\mathbb{P}(S_{-1} -D  > 0)}} - \sqrt{\mathbb{P}(S_{-1} -D  > 0)}        
    \end{multline*}
\end{proof}

\begin{corollary}
  The general reward function $G$ associated with the proposed queueing system characterized by general service and general delay distribution for polynomial function $f(x)=\frac{1}{(x+1)^ \gamma}$ and deterministically chosen lag, attains its global maximum at $\Delta = 0$, if it satisfies the condition
 \begin{multline*}
    {\gamma} \mathbb{E}[D +S +1]\frac{ \sqrt{\mathbb{E}[(S+1)^{-2\gamma-2}]}}{\mathbb{E}[(S+S_{-1} +1)^{-\gamma}]} \\
     \leq \frac{1}{\sqrt{\mathbb{P}(S_{-1} -D  > 0)}} - \sqrt{\mathbb{P}(S_{-1} -D  > 0)}        
    \end{multline*} 
\end{corollary}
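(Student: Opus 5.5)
The plan is to specialize Theorem \ref{thm1:generalReward} to $f(x)=(x+1)^{-\gamma}$, in the same way the preceding corollary handled the exponential case. Since this corollary is stated as a direct consequence of the general sufficient condition, it is enough to compute the two $f$-dependent quantities in that condition, namely $\sqrt{\mathbb{E}[(f'(S))^2]}$ and $\mathbb{E}[f(S+S_{-1})]$. We then substitute them, leaving the right-hand side $\frac{1}{\sqrt{\mathbb{P}(S_{-1}-D>0)}}-\sqrt{\mathbb{P}(S_{-1}-D>0)}$ unchanged.

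First I will check that $f$ meets the structural properties used in the proof of Theorem \ref{thm1:generalReward}.
\begin{itemize}
\item On $[0,\infty)$, $f$ is non-negative and decreasing.
\item Its derivative is $f'(x)=-\gamma(x+1)^{-\gamma-1}$, so $|f'(x)|=\gamma(x+1)^{-\gamma-1}$ is non-increasing in $x$. This is exactly what justifies the bound $|f'(S+W)\,dW/d\Delta|\le |f'(S)|\mathbf{1}(S_{-1}-D>\Delta)$ in that proof.
\item Because $S\ge 0$, all the quantities involved are bounded and hence finite: $|f'(S)|\le\gamma$ and $f\le 1$.
\end{itemize}

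Next come the two moments:
\begin{align*}
\mathbb{E}[(f'(S))^2] &= \gamma^2\,\mathbb{E}[(S+1)^{-2\gamma-2}], \\
\mathbb{E}[f(S+S_{-1})] &= \mathbb{E}[(S+S_{-1}+1)^{-\gamma}].
\end{align*}
Taking the square root of the first gives $\gamma\sqrt{\mathbb{E}[(S+1)^{-2\gamma-2}]}$. Substituting both expressions into the left-hand side of Theorem \ref{thm1:generalReward} yields exactly the displayed condition, with the factor $\gamma$ pulled out in front.

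There is no real obstacle here. The only point needing care is that the corollary inherits the side assumption made inside the proof of Theorem \ref{thm1:generalReward}: for $\Delta>1$, the quantity $\Delta^2\,\mathbb{P}(S_{-1}-D>\Delta)$ is assumed to be non-increasing. That hypothesis does not depend on $f$, so it carries over unchanged, and I will state it explicitly when invoking the theorem.
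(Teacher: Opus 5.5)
Your proposal is correct and matches the paper's proof: the paper also just computes $\mathbb{E}[(f'(S))^2]=\gamma^2\mathbb{E}[(S+1)^{-2\gamma-2}]$ and the denominator moment, and substitutes them into Theorem~\ref{thm1:generalReward}. Two of your additions are worthwhile, because the paper's proof of Theorem~\ref{thm1:generalReward} uses both facts without stating them as hypotheses: the explicit check that $|f'|$ is non-increasing (i.e., that this $f$ is convex), and the flagging of the side assumption on $\Delta^2\,\mathbb{P}(S_{-1}-D>\Delta)$.
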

\begin{proof}
Let $f(x)= \frac{1}{(x+1)^\gamma}$. Then, 
$\mathbb{E}[(f(S)] = \mathbb{E}[(S+1)^{-\gamma}]$, 
$\mathbb{E}[(f'(S))^2] = \gamma^2\mathbb{E}[ (S+1)^{-2\gamma -2}]$. 
    By substituting these in to the general sufficient condition for optimal $\Delta =0$    
    we get the following 
     \begin{multline*}
    {\gamma} \mathbb{E}[D +S +1]\frac{ \sqrt{\mathbb{E}[(S+1)^{-2\gamma-2}]}}{\mathbb{E}[(S+S_{-1} +1)^{-\gamma}]} \\
     \leq \frac{1}{\sqrt{\mathbb{P}(S_{-1} -D  > 0)}} - \sqrt{\mathbb{P}(S_{-1} -D  > 0)}        
    \end{multline*}
\end{proof}

If we consider optimizing the surrogate reward function for the exponential function, which is an upper bound, we get a set of simpler conditions that are easier to check. Hence, we discuss them below.

\begin{theorem}
\label{thm2:expReward}
   The surrogate reward function $G_{\text{sur}}$ associated with the proposed queueing system characterized by general service and a general delay distribution for exponential function $f(x)=\exp(-\kappa x)$ and deterministically chosen lag, attains its global maximum at $\Delta = 0$, if one of the following conditions holds:
    \begin{enumerate}
\item  $M_S(-\kappa) M_D(\kappa) \geq 1$
\vspace{0.1cm}
\item $\frac{1}{\kappa} \ln \frac{1}{M_D(\kappa) M_S(-\kappa)} + E[D] + \left[E[W] \right]_{\Delta = 0} < \frac{1}{\kappa} \mathbb{P}(D > S)  $
    \end{enumerate}

\end{theorem}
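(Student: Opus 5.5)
We analyze $G_{\text{sur}}(\Delta)=\frac{M_S(-\kappa)\min(M_S(-\kappa)e^{\kappa\Delta}M_D(\kappa),1)}{\Delta+\EX[D]+\EX[W]}$ by writing $a=M_S(-\kappa)M_D(\kappa)$ and $h(\Delta)=\Delta+\EX[D]+\EX[W]$, where $\EX[W]$ depends on $\Delta$. By Lemma \ref{lem:E[W]}, $h'(\Delta)=1-\mathbb{P}(S-D>\Delta)=\mathbb{P}(S-D\le\Delta)\ge 0$. Hence the denominator is nondecreasing in $\Delta$. The numerator is $M_S(-\kappa)\min(ae^{\kappa\Delta},1)$. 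It increases exponentially up to the kink $\Delta^*=\frac1\kappa\ln\frac1a$ when $a<1$, and is constant afterwards. The plan is to split according to whether this kink lies at $\Delta\le 0$ or $\Delta>0$.

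\emph{Condition 1.} If $a\ge1$, then $\min(ae^{\kappa\Delta},1)=1$ for all $\Delta\ge0$. So $G_{\text{sur}}=M_S(-\kappa)/h(\Delta)$, and since $h$ is nondecreasing, the maximum over $\Delta\ge 0$ is attained at $\Delta=0$. This case is immediate.

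\emph{Condition 2.} Now $a<1$ and $\Delta^*>0$.

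On $[\Delta^*,\infty)$ the numerator is the constant $M_S(-\kappa)$ and $h$ is nondecreasing, so $G_{\text{sur}}(\Delta)\le M_S(-\kappa)/h(\Delta^*)$.

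On $[0,\Delta^*]$ we have $G_{\text{sur}}=M_S(-\kappa)ae^{\kappa\Delta}/h(\Delta)$. Its log-derivative is $\kappa-h'(\Delta)/h(\Delta)\ge\kappa-1/h(\Delta)$, using $h'\le 1$. Since $h(\Delta)\ge h(0)$, this is positive once $h(0)>1/\kappa$, and then $G_{\text{sur}}$ would increase on this interval. So monotonicity alone does not settle the comparison; we need a direct comparison of the value at $0$ against the values on $(0,\infty)$.

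The plan is therefore to show $G_{\text{sur}}(0)\ge G_{\text{sur}}(\Delta)$ for all $\Delta>0$, i.e.
\[
\frac{a}{h(0)}\ge\frac{\min(ae^{\kappa\Delta},1)}{h(\Delta)}.
\]
Set $\phi(\Delta)=\ln h(\Delta)-\ln\min(ae^{\kappa\Delta},1)$. The target becomes $\phi(\Delta)\ge\phi(0)$. For this we use concavity-type bounds on $h$:
\begin{itemize}
\item On $[0,\Delta^*]$, by the mean value inequality, $\ln h(\Delta)-\ln h(0)\ge \frac{\Delta\,\mathbb{P}(S-D\le0)}{h(\Delta)}$, because $h'(\Delta)\ge h'(0)=\mathbb{P}(D\ge S)$. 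This must be compared against $\kappa\Delta$.
\item Beyond $\Delta^*$, the term $\ln h$ keeps growing while $\ln\min(\cdot)$ is flat, so $\Delta^*$ is the worst point.
\end{itemize}
The inequality in condition 2 reads $\Delta^*+h(0)<\frac1\kappa\mathbb{P}(D>S)$. Since $h(\Delta)\le\Delta+h(0)$, it gives $h(\Delta)<\frac{1}{\kappa}\mathbb{P}(D>S)$ for all $\Delta\le\Delta^*$. Then $\frac{h'(\Delta)}{h(\Delta)}\ge\frac{\mathbb{P}(D> S)}{h(\Delta)}>\kappa$, which makes the log-derivative $\kappa-h'/h$ negative on $[0,\Delta^*]$. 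Hence $G_{\text{sur}}$ is decreasing there. Combined with the constant-numerator argument on $[\Delta^*,\infty)$ and continuity at $\Delta^*$, this gives the global maximum at $\Delta=0$.

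\emph{Main obstacle.} The key step is the lower bound $h'(\Delta)\ge\mathbb{P}(D>S)$, obtained from the monotonicity of $\mathbb{P}(S-D\le\Delta)$ in $\Delta$, together with the stationarity identification $S_{-1}\stackrel{d}{=}S$. I also need to take care with ties $D=S$, where $\mathbb{P}(D\ge S)$ versus $\mathbb{P}(D>S)$ matters; the bound holds in either case since $\mathbb{P}(D\ge S)\ge\mathbb{P}(D>S)$. The second delicate point is justifying that $h(\Delta)\le \Delta+h(0)$, which follows since $\EX[W]$ is nonincreasing in $\Delta$ by Lemma \ref{lem:E[W]}.
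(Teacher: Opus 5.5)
Your argument is correct and is essentially the paper's proof. Condition 1 follows because the denominator $h$ is nondecreasing by the lemma on $d\mathbb{E}[W]/d\Delta$. For condition 2, you discard $\Delta>\Delta^*$ and show $\kappa h(\Delta)<h'(\Delta)$ on $[0,\Delta^*]$ from $h(\Delta)\le \Delta^*+h(0)$ and $h'(\Delta)\ge \mathbb{P}(S-D\le 0)$, which is exactly the paper's bound. Your remark $\mathbb{P}(D\ge S)\ge\mathbb{P}(D>S)$ also reconciles the paper's proof, which uses $\mathbb{P}(S-D\le 0)$, with the statement, which uses $\mathbb{P}(D>S)$. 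The detour through $\phi$ and the mean value inequality can be deleted: condition 2 already forces $h(0)<1/\kappa$, so plain monotonicity on $[0,\Delta^*]$ does settle the comparison, contrary to your intermediate remark.
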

The detailed proof of Theorem~\ref{thm2:expReward} is given in Appendix~\ref{proof_thm2}.
The above-derived results establish the conditions under which $\Delta = 0$ is the optimal choice to attain a maximum value of the reward function. The optimal value of $\Delta$ for different conditions is given in Table \ref{tablegencondition}.
However, it is not clear what fraction of distributions and their parameters satisfy these conditions. 

To gain a better understanding of this, we visualize the parameter space for the exponential function by plotting the regions where the optimality condition in Table \ref{tablegencondition} holds. The plotted regions are colour-coded to represent different ranges of parameter values and their corresponding effects on the system's behaviour. This helps to identify and interpret the boundaries of the parameter space where $\Delta = 0$ is optimal, providing a clearer insight into the practical applicability of the theoretical results.
\begin{figure*}[!t]
\centering
\subfloat[]{%
\includegraphics[width=0.48\textwidth]{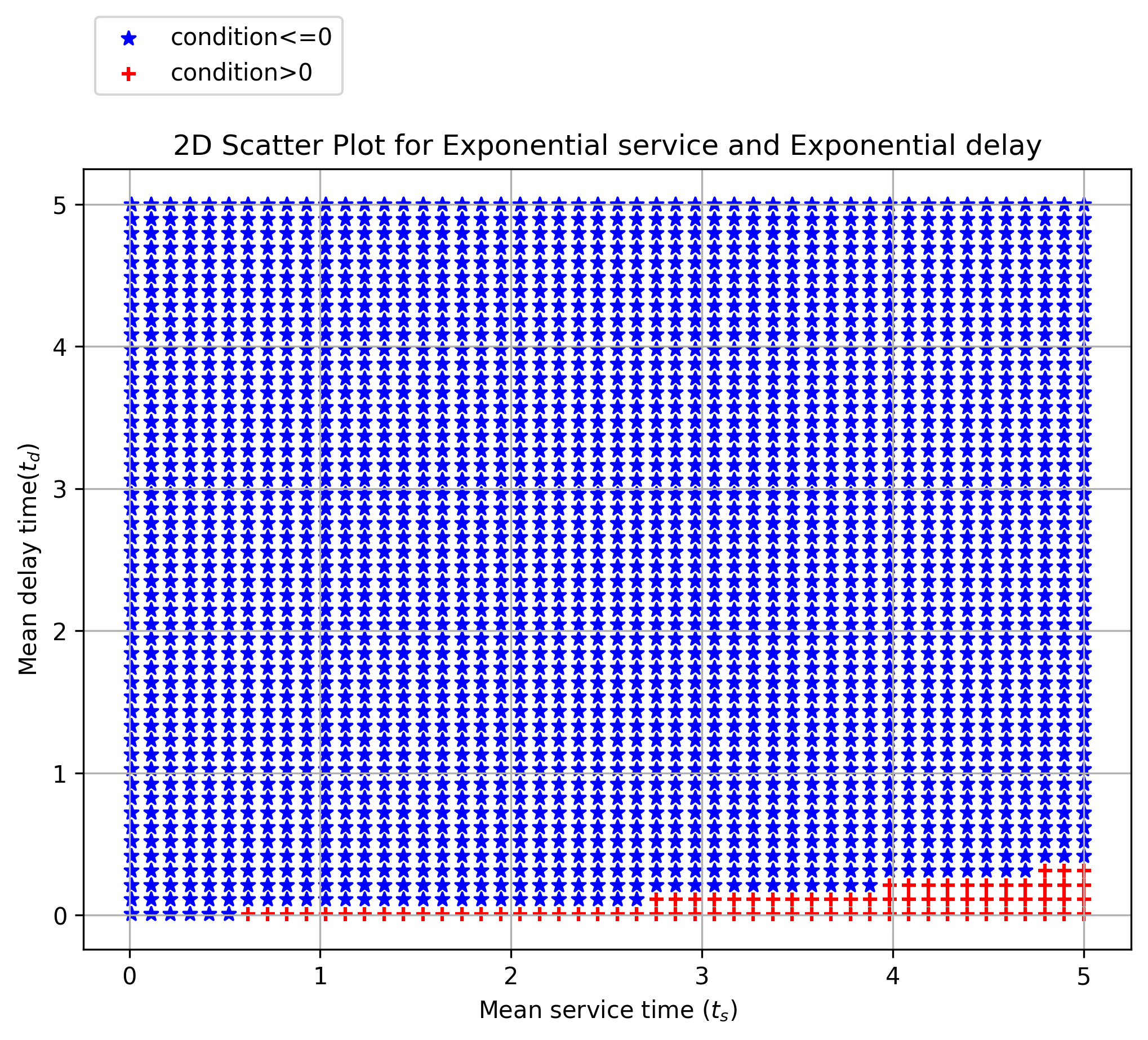}
}
\hfil
\subfloat[]{%
\includegraphics[width=0.48\textwidth]{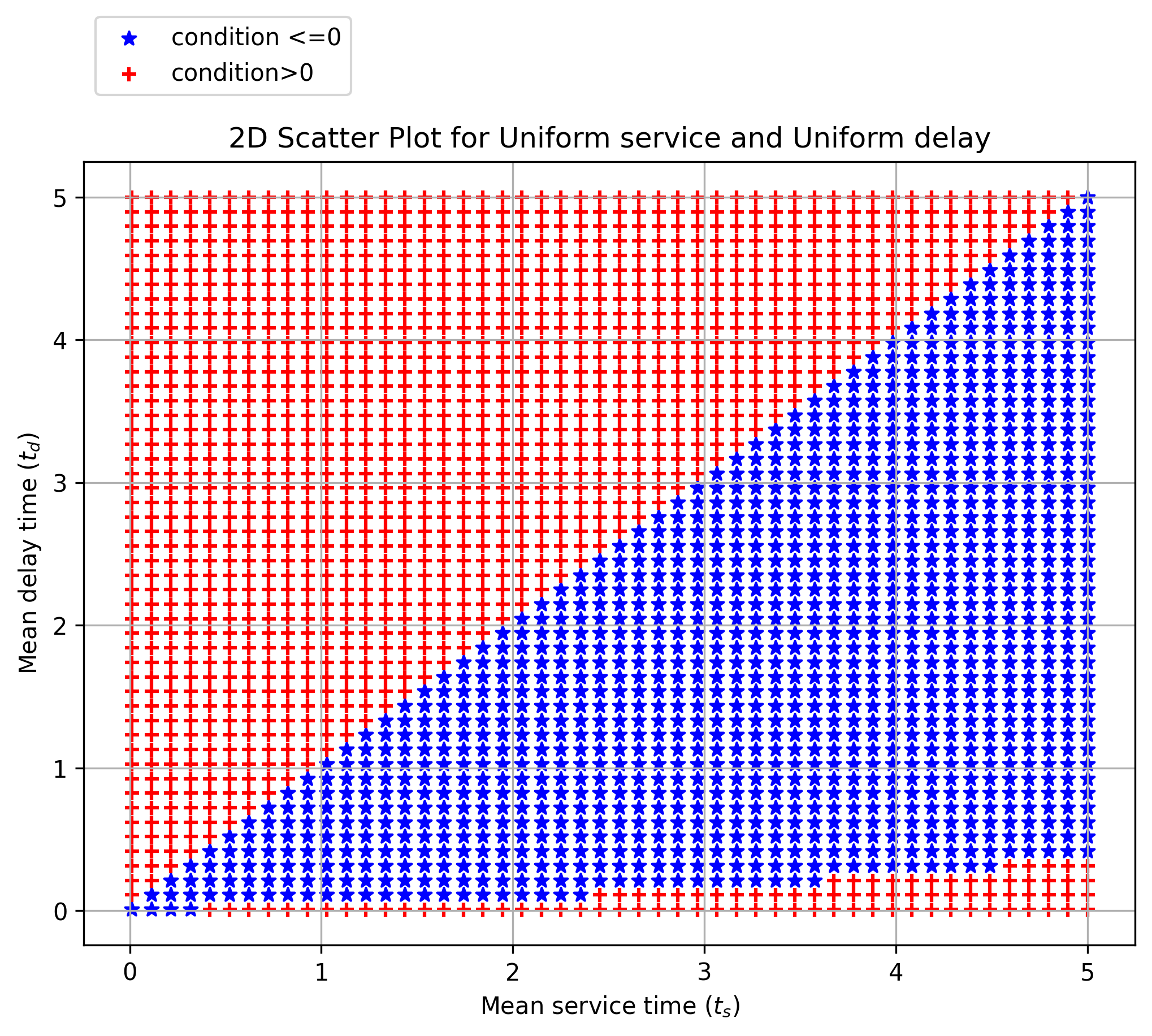}
}

\caption{2D scatter plot of the reward function versus service and delay time parameters. (a) Exponential service and exponential delay distribution. (b) Uniform service and uniform delay distribution }
\label{fig_scatter}
\end{figure*}

\par The 2D scatter plot of the reward function against service and delay time parameters is plotted in Fig. \ref{fig_scatter} for different service and delay distributions to identify the region that satisfies the desired conditions given in Table \ref{tablegencondition}. 
If the service and delay parameters are chosen within the blue region which satisfies condition 1 given in Table \ref{tablegencondition}, then $\Delta = 0$ is the optimal choice of the lag. If the parameters are chosen in the red region which does not satisfy Condition 1, the optimal lag value depends on $\kappa$, service, and delay parameters. 

\begin{table}[!t]
\caption{General distribution condition to find optimal $\Delta$}
   \label{tablegencondition}
    \centering
    \resizebox{\columnwidth}{!}{
    \renewcommand{\arraystretch}{2}
    \begin{tabular}{|P{0.1\linewidth}|P{0.6\linewidth} | P{0.3\linewidth}|} 
    \hline
   S.No & \textbf{Condition} & \textbf{Optimal $\Delta$} \\
    \hline
    \multicolumn{3}{|c|}{\textbf{$\Delta$ is chosen deterministically for any general function f }}\\
\hline
    1& $  \mathbb{E}[D + S +1]\frac{\sqrt{\mathbb{E}[(f'(S))^2]}}{\mathbb{E}[f(S+S_{-1})]}
    - \frac{1 - \mathbb{P}(S_{-1} -D  > 0)}{\sqrt{\mathbb{P}(S_{-1} -D  > 0)}} \leq 0 $ & $ \Delta =0$ \\
         \hline
      2& $  \mathbb{E}[D + S +1]\frac{\sqrt{\mathbb{E}[(f'(S))^2]}}{\mathbb{E}[f(S+S_{-1})]}
    - \frac{1 - \mathbb{P}(S_{-1} -D  > 0)}{\sqrt{\mathbb{P}(S_{-1} -D  > 0)}} > 0 $ & Depends on $\kappa$, service and delay parameter\\
    \hline
    \end{tabular}
    }

\end{table}


\subsection{Grid search as a benchmark}
A grid search approach can be used to empirically determine the optimal lag $\Delta$, even when the service and delay time distributions are unknown. 
In this approach, the queue is simulated by generating $n$ service and delay samples with the mean service time $t_s$ and the mean delay time $t_d$ from the unknown underlying service and delay distribution. For each value of $\Delta$, the reward $\frac{\EX[f(T)]}{\EX[IAT]}$ is estimated from the simulated data.
The grid search identifies the optimal calling time $\Delta_{sim}$ that has a maximum simulated reward $G_{sim}$.

The grid search can only serve as a benchmark since one will not have access to the large set of samples from the active
queue. The Bayesian learning method offers a principled alternative framework for this problem by incorporating prior information and sequentially updating posterior beliefs based on observed data.

\subsection{Bayesian Estimation of optimal $\Delta$}

In Section II-A, an optimal job-calling policy was proposed for general service and delay distributions that maximizes the reward function.
The lag is deterministic, and we showed in Theorems \ref{thm1:generalReward},\ref{thm2:expReward}  that under certain conditions, $\Delta = 0$ is optimal lag time. Here, we are trying to find the optimal $\Delta$ for cases that do not fall under these Theorems.

In the general case, mathematically, it is quite hard to get a closed-form expression for the reward function in order to find the optimal lag.
For example, when the service and delay distributions are uniform or exponential, expressions for the reward function and the corresponding conditions can be derived to determine the optimal value of $\Delta$. In contrast, obtaining such expressions becomes mathematically intractable when the distributions follow a truncated normal form. The natural question that arises is, if we cannot theoretically determine the optimal lag time, can we learn it using Bayesian methodology?
In practice, the distributions of service and delay time may
not be known. Hence, the optimal calling distribution has to
be learned over time.
\par To address this, 
we estimate the optimal lag using a parametric Bayesian learning approach. 
In this approach, the lag  $\Delta$ is modelled as a random variable with a distribution characterized by parameter  $\theta$. 
The conjugate prior distribution, namely the gamma distribution for the parameter of the estimated lag is maintained and updated for every data sample, as samples arrive.
\par The Bayesian approach relies on conjugate prior distributions to ensure analytical tractability. However, for heavy-tailed distributions, such as the Generalized Pareto Distribution (GPD) and the Hyper-Exponential Distribution (HED), conjugate priors are complex. So, we model the optimal lag distribution using an exponential distribution \cite{raj2018spectrum} since 
\begin{enumerate}
    \item It is non-negative.
    \item It is a maximum entropy distribution among all continuous PDFs defined on [0,$\Delta$]. With a fixed specified mean, it is the least informative distribution.
    \item Its conjugate prior is simple, i.e., the gamma distribution. 
\end{enumerate}

The lag is assumed to be a sample from an exponential distribution with parameter $ \theta $ i.e.,
$$ \Delta \sim  Exp(\theta) = \theta e^{-\theta t}$$
The mean of the distribution is $ 1/{\theta}$, which corresponds to the
mean lag time. 
Therefore, to determine the lag time(t), we consider the inverse of the sample obtained from the prior distribution. Physically,
$ {1}/{\theta}$ signifies the mean time duration to call the next job. As we would like to learn this value and use it as a proxy for optimal lag time.

The conjugate Gamma prior for $\theta$ is parameterized by $\alpha$ and $\beta$, i.e, $\theta \sim G(\alpha,\beta)$. 
 $${p}(\theta) = \frac{\beta^{\alpha}  }{ \Gamma(\alpha)} \theta^{\alpha-1} e^{-\beta\theta}$$
\begin{algorithm}[t] 
\caption{Proposed Algorithm}\label{algo}
\begin{algorithmic}
\State \textbf{Initialization} $\alpha_0=1,\beta_0 =1$,$\forall$ $j=1,2,...,N$.
\State  Server\_State = S, Server\_busy = $b$, Server\_{idle} = $i$.
\For{j= 1,2,...,N\text{ samples}}
\State Draw a sample $\hat{\theta} \sim G(\alpha_{j-1},\beta_{j-1})$
\State $t = \frac{1}{\hat{\theta}} $  

\If{j = 1 and $S_{j} = i$ }
 \State $\beta_j \leftarrow \beta_{j-1} + t$
     \State $\alpha_j \leftarrow \alpha_{j-1} + \epsilon_{idle}$

\ElsIf{j = 1 and $S_{j} = b$ }
 \State $\beta_j \leftarrow \beta_{j-1} + t$
     \State $\alpha_j \leftarrow \alpha_{j-1} + \epsilon_{busy}$
\Else
\If{$S_{j} = i$  and $S_{j-1} = i $  }       
            \State $\beta_j \leftarrow \beta_{j-1} + t $
            \State $\alpha_j \leftarrow \alpha_{j-1} + \epsilon_{idle}$    
\ElsIf{$S_{j} = b$  and $S_{j-1} = b$  } 
     \State $\beta_j \leftarrow \beta_{j-1} + t$
     \State $\alpha_j \leftarrow \alpha_{j-1} + \epsilon_{busy}$   
    \EndIf
    \EndIf

\EndFor
\end{algorithmic}
\end{algorithm}
\subsubsection{Working of Algorithm}
Our objective is to estimate the optimal lag time to call the next job. The queue is simulated by drawing a sample from the service and delay distribution. The lag time is found by drawing a sample 
$\hat{\theta}$ from the conjugate posterior distribution and taking the inverse.
The motivation for sampling from the Gamma posterior distribution, rather than using its mean \textbf{$\alpha/\beta$}, is to enable exploration of both smaller and larger lag values, thereby increasing the likelihood of converging to the optimal lag duration.
As the number of observations increases, the parameters of the distribution are updated, and the variance of the Gamma distribution decreases. 

The posterior distribution of $\theta$ is given by
\begin{align*}
 {p}(\theta|x) &\propto  \frac{\beta^{\alpha}  }{ \Gamma(\alpha)} \theta^{\alpha-1} e^{-\beta\theta} \theta e^{-\theta x}\\
 &\propto \theta^{\alpha+1-1}e^{-(\beta + x)\theta}\\
 {p}(\theta|x) & \sim \Gamma(\alpha +1, \beta +x)
\end{align*}

The posterior distribution seen over $n$ samples is given by
$$ {p}(\theta|x_1, x_2,...,x_n)  \sim \Gamma(\alpha +n_0, \beta +\sum_{j=1}^{n} x_j) $$ where $n_0 $ is the number of optimal lag time periods.
As the number of observed lag samples $x_j$ increases, the posterior distribution ${p}(\theta|x_1, x_2,...,x_n)$ concentrates around the maximum likelihood estimate.

\par  The hyperparameters $\alpha$ and $\beta$ are updated when both the current and previous server states are identical, i.e., both idle or both busy. If one state is idle and the other is busy, the server is considered to be in steady state, and no update is performed.  

The parameter $\alpha$  tracks the number of optimal lag periods that have been elapsed for the chosen lag sample 1/$\hat{\theta}$. 
We assume $\epsilon_{idle}$ and $\epsilon_{busy}$ optimal periods have passed if both the present and previous server status are idle and busy respectively. 
The parameters  $\epsilon_{idle}$ and $\epsilon_{busy}$
 determine how the conjugate posterior parameters are updated in response to recent server activity.
The lag time is  inversely proportional to the $\alpha$ , for each lag sample $\alpha$ is updated depending on the server status.
If the present and previous server state is idle, this indicates that a decrease in lag time is possible due to server availability which means that the mean lag time can be decreased, $\alpha$  is increased by $\epsilon_{idle}$.   
Similarly, if the present and previous server states are  busy, 
the mean lag time should be increased, $\alpha$  is increased by $\epsilon_{busy}$. This action serves to increase the lag time, likely as a response to high server load.
The parameter $\beta$ is updated by observed lag period if both present and previous server status are either idle or busy.  
 
Therefore after observing $n$ samples, $\alpha$ is updated by $n_0$, which gives number of optimal lag periods and $\beta$ gives the total duration of observed optimal lag period.

By continuously updating the parameters with each new sample, 
we arrive at an appropriate lag time distribution and the posterior mean converges to the optimal mean lag time.
Further, when the service and delay distributions themselves
change over time, and the change is unknown, the Bayesian
method can adapt to the change.

In such non-stationary settings, a Bayesian approach can adapt more naturally to these changes by continuously updating the posterior as new observations arrive, allowing it to track distributional shifts more effectively.

\section{Simulation results}
In this section, we present simulation results of the grid search and Bayesian estimation methods for estimating the optimal lag time that maximizes the reward function. In all simulations, the reward is computed using the exponential form
 $f(T)=\exp(-\kappa T)$ due to its mathematical tractability.
\subsection{{Grid search results}}
 In Fig. \ref{fig:gridsearch}, the grid search estimated reward is compared with the theoretical surrogate reward for $t_s =1$, $t_d =0.33$. We observed that the estimated reward curve for the grid search matches the theoretical surrogate reward for a chosen service and the delay time. Therefore, all the above observations reinforce our theoretical results.
\begin{figure} [t] 
    \centering  
    \includegraphics[width=0.9\columnwidth]{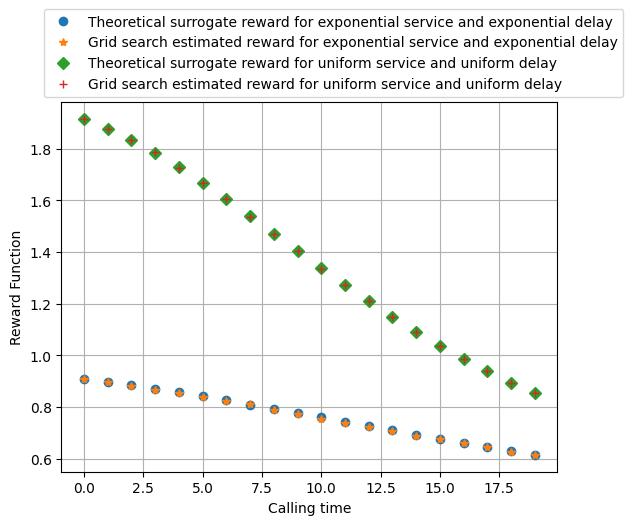}
    \caption{Comparison of Grid search estimated reward and the theoretical surrogate reward for different service and delay distributions with $t_s =1, t_d = 0.33$}
    \label{fig:gridsearch}
\end{figure}
\begin{table}[!t]
\caption{Comparison of Bayesian estimated reward and the theoretical surrogate reward for different service and delay distributions}
\label{table3}
\centering
\renewcommand{\arraystretch}{1.05}
\resizebox{\columnwidth}{!}{
\begin{tabular}{|>{\centering\arraybackslash}p{1.15cm}|>{\centering\arraybackslash}p{1.15cm}|>{\centering\arraybackslash}p{1.15cm}|>{\centering\arraybackslash}p{1.15cm}|>{\centering\arraybackslash}p{1.15cm}|}
\hline
$t_s$ &$t_d$ & $G_{sur}$ & $G_{be}$ & $G_{tb}$ \\
\hline

\multicolumn{5}{|c|}{\textbf{Case A: Exponential service and Exponential delay distribution}}\\
\hline
1&0.33&0.90910&0.8969&0.8888\\
\hline
0.5&0.1667&1.8308&1.7946&1.7779\\
\hline

\multicolumn{5}{|c|}{\textbf{Case B: Exponential service and Uniform delay distribution}}\\
\hline
1&0.33&0.9657&0.9628&0.9527\\
\hline
0.5&0.1667&1.9485&1.9359&1.9161\\
\hline

\multicolumn{5}{|c|}{\textbf{Case C: Uniform service and Uniform delay distribution}}\\
\hline
1&0.33&1.9139&1.8908&1.8934\\
\hline
0.5&0.1667&3.8410&3.7841&3.7506\\
\hline

\multicolumn{5}{|c|}{\textbf{Case D: Uniform service and Exponential delay distribution}}\\
\hline
1&0.33&1.6455&1.6451&1.6321\\
\hline
0.5&0.1667&3.2914&3.2816&3.2545\\
\hline

\multicolumn{5}{|c|}{\textbf{Case E: Exponential service and Truncated Normal delay }}\\
\multicolumn{5}{|c|}{\textbf{ distribution}}\\
\hline
1&0.33&-&0.7497&-\\
\hline
0.5&0.1667&-&1.2584&-\\
\hline

\multicolumn{5}{|c|}{\textbf{Case F: Truncated Normal service and Truncated Normal }}\\
\multicolumn{5}{|c|}{\textbf{ delay distribution}}\\
\hline
1&0.33&-&0.7419&-\\
\hline
0.5&0.1667&-&1.0148&-\\
\hline
\end{tabular}
}

\end{table}

\begin{figure} [t] 
    \includegraphics[width=0.9\columnwidth]{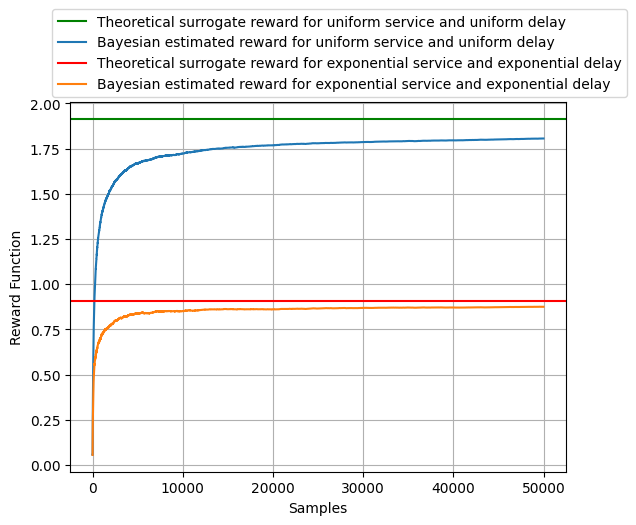}
			\caption{Comparision of Bayesian estimated reward and theoretical surrogate reward for different service and delay distributions with $t_s =1, t_d = 0.33$}		
            \label{fig_be_reward}
    
\end{figure}

\begin{figure*}[!t]
\centering

\subfloat[]{%
\includegraphics[width=0.48\textwidth]{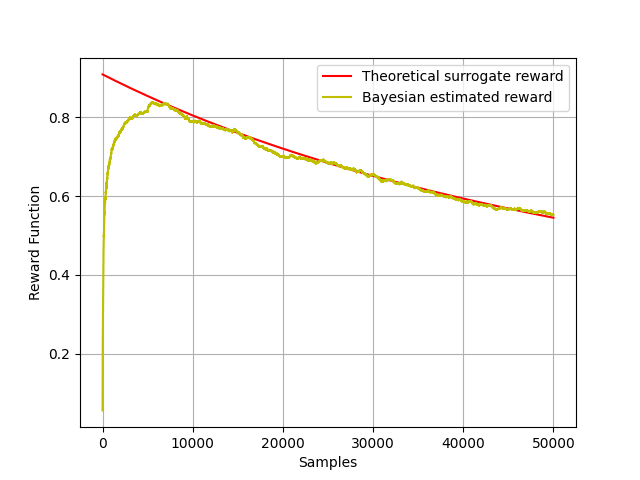}
\label{gradual}
}
\hfil
\subfloat[]{%
\includegraphics[width=0.48\textwidth]{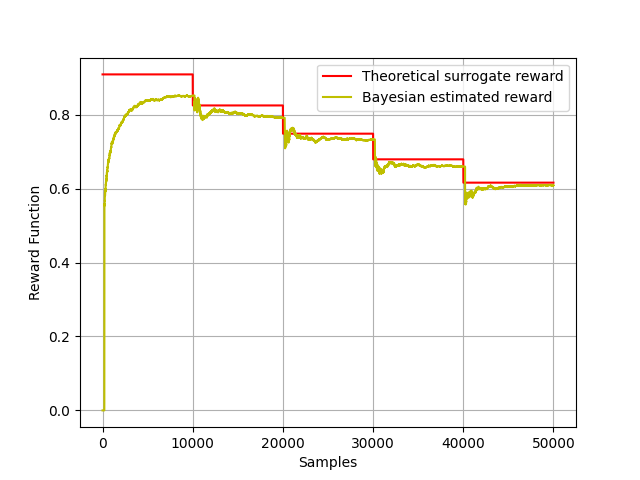}
\label{abrupt}
}

\caption{Mean-shift comparison of Bayesian estimated reward and the theoretical surrogate reward for exponential service and exponential delay distributions. (a) Gradual change in mean of service and delay distribution. (b) Abrupt change in mean of service and delay distribution. }
\label{fig_meanshift}
\end{figure*}


\subsection{Bayesian estimation results}
The optimal lag is estimated using the proposed Bayesian approach by modelling the queue with exponential or general service-time and delay-time distributions. For the general case, uniform, exponential, and truncated normal distributions are considered.
In the proposed algorithm \ref{algo}, the hyperparameters $\alpha$ and $\beta$ are adaptively updated depending on the server availability.
The parameters $\epsilon_{idle}$,$\epsilon_{busy}$ play an important role in updating the conjugate posterior distribution parameters.
To find an appropriate value for $\epsilon_{idle}$,$\epsilon_{busy}$, extensive evaluation studies are done and found that $\epsilon_{idle} = 3$,$\epsilon_{busy} = 1$ works well.
In all the cases, the queue is simulated for 50000 samples with $\epsilon_{idle} = 3$,$\epsilon_{busy} = 1$ and the optimal lag is estimated using Bayesian method.
Once the Bayesian algorithm has converged, the optimal lag is determined.
Due to inherent randomness in each run, the estimated mean lag may vary across trials. Hence, the reward function is evaluated to enable comparison with the surrogate theoretical reward. The reward function  
 $\frac{\EX[\exp(-\kappa T)]}{\EX[IAT]}$,
is calculated using last 5000 samples.

In Fig. \ref{fig_be_reward} and  Table \ref{table3}, for different service and delay distribution the Bayesian estimated reward $G_{be}$  is compared with the theoretical surrogate reward $G_{sur}$ and observed $G_{be}$ is approximately equal to $G_{sur}$.
In Fig. \ref{fig_be_reward}, it is observed that the Bayesian estimated reward rapidly converges to the theoretical surrogate reward within the initial few samples and subsequently adapts to changes in the underlying distribution.

The theoretical surrogate reward value $G_{tb}$ is calculated for the Bayesian estimated lag time 
and compared it with true theoretical surrogate reward $G_{sur}$. It was observed that $G_{tb}$ and $G_{sur}$ values were very close, confirming that Bayesian estimation is effective for finding the optimal calling time.

In Table \ref{table3}, for cases A to D, the system is stationary and we could derive the theoretical expressions for the reward function.
In Table \ref{table3}, for cases E,F the system is stationary, but obtaining a closed form theoretical expressions for the surrogate reward function is very difficult for exponential service and truncated normal delay distribution and truncated normal service and delay distributions. 
\par In these cases, we estimated the reward function $G_{be}$ using Bayesian methods and observed the reward value is close to the grid search simulated reward $G_{sim}$. 
\subsubsection{Mean shift of service and delay distribution}
We now consider scenarios in which the system's characteristics vary over time, either gradually or abruptly. 
In the proposed theory and grid search we take  expectation to calculate the reward function under the assumption of a stationary system. 
\par The proposed algorithm employs sequential Bayesian estimation to enable continuous learning, allowing it to adapt to changes in the parameters of the service and delay distributions by updating the posterior distribution, without discarding prior observations.
\par {In Fig. \ref{gradual}, the mean service and delay times are gradually changed linearly and the Bayesian estimated reward is calculated using the sliding window method.  
In Fig. \ref{abrupt}, the mean service and delay times are changed for every 10000 samples and the Bayesian estimated reward is calculated without considering the transient phase.
In Fig. \ref{fig_meanshift}, for the gradual and abrupt change in the mean service and delay distribution, the Bayesian estimated reward $G_{be}$ is approximately equal to the theoretical surrogate reward $G_{sur}$. 
In non-stationary cases, for each pair of mean service and delay times ($t_s$,$t_d$)
respectively, the theoretical reward function is evaluated over different lag-time samples, and the maximum reward is chosen as the theoretical surrogate reward.   
In contrast, the Bayesian approach does not require prior knowledge of changes in mean service and delay times, it adaptively learns and estimates the reward from the observed data.

Therefore, the proposed algorithm exhibits robustness by effectively adapting to both rapid and gradual changes in the service and delay parameters through continuous posterior updates enabled by sequential Bayesian estimation.

\section{Conclusion}

This paper addresses the practical challenge of maximizing the throughput of high-quality qubits in the presence of decoherence and limited transmission buffer capacity. By formulating the qubit generation process as an admission control problem for a finite-buffer queue, we develop a rigorous analytical framework that characterizes the trade-off between qubit fidelity and transmission rate. We derive verifiable conditions under which a simple and implementable zero-lag admission policy is optimal.

To account for the limitations of static analytical results in time-varying environments \cite{lavery2016polarization,wang2025fast}, we further propose a parametric Bayesian learning algorithm that enables the system to adaptively learn optimal generation times without prior knowledge of the underlying distributions.
\par The proposed framework provides actionable insights for the design of efficient quantum communication systems. Moreover, the results extend naturally to other delay-sensitive settings, including memory-constrained IoT networks and service systems. Future work will consider extensions to more general buffer architectures and multi-user network scenarios.

{\appendix 
\section*{Proof of Theorem 2}
\label{proof_thm2}
\begin{proof}
If  $M_S(-\kappa)M_D(\kappa) \geq 1$, the reward function is given by
\begin{equation}
    G= \frac{{M}_S(-\kappa) }{\Delta+ \mathbb{E}[D]+\mathbb{E}[W]}    
\end{equation}
To maximize the reward function over $\Delta$, the denominator should be minimized over $\Delta$. The derivative of denominator is given by $1 + \frac{d}{d\Delta}(\mathbb{E}[W]) $.
By Lemma~\ref{lem:E[W]}, the derivative of the denominator, given by $\mathbb{P}(S-D \leq \Delta)$, is an increasing function of $\Delta$. 
This implies that the reward function decreases as $\Delta$ increases, and the reward function $G$ is maximum at $\Delta =0 $.
\par If  $M_S(-\kappa)M_D(\kappa) < 1 $, the reward function is given by 
\begin{equation}\label{gen_reward_exp}
    G= \frac{{M}_S(-\kappa)  ({M}_S(-\kappa)e^{\kappa \Delta} {M}_D(\kappa) ) }{\Delta+ \mathbb{E}[D]+\mathbb{E}[W]}    
\end{equation}
Let $\Delta^*$ be the smallest $\Delta$ for which $M_D(\kappa) M_S(-\kappa)e^{\kappa \Delta} = 1$. 
For $\Delta > \Delta^*$, the denominator of the reward function increases with an increase in $\Delta$ while the numerator remains the same. So, the optimal $\Delta$ cannot be greater than $\Delta^*$. Thus, the search space is reduced to $[0,\Delta^*]$. Note that $\Delta^*$ is given by
\begin{equation}
    \frac{1}{\kappa}\ln \frac{1}{  M_S(-\kappa)M_D(\kappa)} 
\end{equation}
We consider $\frac{dG}{d\Delta}$  on $[0,\Delta^*]$. Note that if it is negative on $[0,\Delta^*]$, the optimal choice of $\Delta$ is $0$. 
After some algebraic manipulations on the derivative of $G$ with respect to $\Delta$, we observe that the condition 
$\frac{dG}{d\Delta} < 0 $ is equivalent to
$$ \Delta + E[D] + E[W] - \frac{1}{\kappa} \mathbb{P}(S-D \leq \Delta) < 0 $$
If the maximum value over the range $[0,\Delta^*]$ of the expression in the left hand side of the above condition is less than $0$, then  $\frac{dG}{d\Delta} < 0 $ on $[0,\Delta^*]$. 
Clearly, the expression in the left hand side of the above condition has a linear increasing term in $\Delta$ and other terms, $E[W]$ and $(- \mathbb{P}(S-D \leq \Delta))$, that decreases with  $\Delta$. Thus, if
\begin{equation}\label{maxdggeneral}
\Delta^* + E[D] + [E[W]]_{\Delta = 0} - \frac{1}{\kappa} \mathbb{P}(S-D \leq 0) < 0
\end{equation}
then $\frac{dG}{d\Delta} < 0 $ on $[0,\Delta^*]$.
Substituting  the value of $\Delta^*$ in equation (\ref{maxdggeneral}), we obtain the following condition: 
\begin{multline}\label{finalcond2}
 \frac{1}{\kappa} \ln (\frac{1}{M_D(\kappa) M_S(-\kappa)}) + E[D] + [E[W]]_{\Delta = 0}\\
 - \frac{1}{\kappa} \mathbb{P}(S-D \leq 0) < 0
\end{multline}

\par Hence, when the distribution satisfies the condition given in equation (\ref{finalcond2}), the reward function is maximized at $\Delta=0$. 
\end{proof}
}
%
\bibliographystyle{ieeetr}
\bibliography{bibtex}

\end{document}